  \def\cC{{\mathcal{C}}}
 \def\cN{{\mathcal{N}}}
\def\ba{{\mathbf{a}}}    
   \def\bn{{\mathbf{n}}} 
  \def\br{{\mathbf{r}}} \def\bs{{\mathbf{s}}} 
   \def\bx{{\mathbf{x}}} 
\def\bz{{\mathbf{z}}} \def\bh{\mathbf{h}}
\def\bA{{\mathbf{A}}} \def\bB{{\mathbf{B}}}   
 \def\bG{{\mathbf{G}}} \def\bH{{\mathbf{H}}} \def\bI{{\mathbf{I}}} 
   \def\bN{{\mathbf{N}}} 
  \def\bR{{\mathbf{R}}} \def\bS{{\mathbf{S}}} 
 \def\bV{{\mathbf{V}}}
\def\argmin{\mathop{\mathrm{argmin}}}
\def\argmax{\mathop{\mathrm{argmax}}}
\def\tr{\mathop{\mathrm{tr}}}
\def\vec{\mathop{\mathrm{vec}}}
\def\diag{\mathop{\mathrm{diag}}}
     \def\d4{\!\!\!\!}
     \def\bthe{\boldsymbol{\theta}}
\def\bThe{\boldsymbol{\Theta}}
   \def\C{{\mathbb{C}}}   \def\E{{\mathbb{E}}}
           \def\lA{\left\|}     \def\rA{\right\|}
  \def\-{\! - \!}  \def\+{\! + \!}  \def\={\! = \!}  \def\>{\! > \!}
\newtheorem{lemma}{Lemma}
\newtheorem{remark}{Remark}
\newcommand{\bef}{\begin{figure}}
\newcommand{\eef}{\end{figure}}
\newcommand{\beq}{\begin{eqnarray}}
\newcommand{\eeq}{\end{eqnarray}}
\newenvironment{proof}[1][Proof]{\begin{trivlist}
\item[\hskip \labelsep {\bfseries #1}]}{\end{trivlist}}
\newcommand{\qed}{\nobreak \ifvmode \relax \else
\ifdim\lastskip<1.5em \hskip-\lastskip \hskip1.5em plus0em
minus0.5em \fi \nobreak \vrule height0.5em width0.5em
depth0.25em\fi}
\newcommand{\opRe}{\operatorname{Re}}
\newcommand{\opIm}{\operatorname{Im}}
\begin{document}
\begin{spacing}{1.5}
\pagenumbering{arabic}

\title{Cost-Efficient RIS-Aided Channel Estimation via Rank-One Matrix Factorization}
\author{Wei Zhang,~\IEEEmembership{Member,~IEEE}
and Wee Peng Tay,~\IEEEmembership{Senior Member,~IEEE
}
\thanks{
{The authors are with the School of Electrical and Electronic Engineering, Nanyang Technological University, Singapore (e-mail: weizhang@ntu.edu.sg, wptay@ntu.edu.sg).}
}
}
\maketitle
\begin{abstract}
A reconfigurable intelligent surface (RIS) consists of massive meta elements, which can improve the performance of future wireless communication systems.
Existing RIS-aided channel estimation methods try to estimate the cascaded channel directly, incurring high computational and training overhead especially when the number of elements of RIS is extremely large.
In this paper, we propose a cost-efficient channel estimation method via rank-one matrix factorization (MF).
Specifically, if the RIS is employed near base station (BS), it is found that the RIS-aided channel can be factorized into a product of low-dimensional matrices.
To estimate these factorized matrices, we propose alternating minimization and gradient descent approaches to obtain the near optimal solutions.
Compared to directly estimating the cascaded channel, the proposed MF method reduces training overhead substantially.
Finally, the numerical simulations show the effectiveness of the proposed MF method.

\end{abstract}
\begin{IEEEkeywords}
Channel estimation, matrix factorization, reconfigurable intelligent surface.
\end{IEEEkeywords}

\section{Introduction}
Due to the increasing bandwidth requirement of modern wireless communication systems, many new technologies such as millimeter wave, hybrid precoding, and massive multi-input and multi-output (MIMO) have been introduced. Though these technologies can provide high spectral efficiency, the hardware cost of these technologies is also large \cite{Heath16}.
Recently, the reconfigurable intelligent surface (RIS) is proposed as an aid to the wireless communications.
An RIS directly reflects the received signal with many low-cost meta elements \cite{EmilBeat2020}, through which the performance of existing systems is improved without high hardware cost.
Another advantage of RIS is that it establishes a reflected transmission path for the communication if the direct transmission is blocked \cite{EmilBeat2020}.
By carefully deploying the RIS near the boundary of a cell, RIS also helps to improve the performance of cell-edge users \cite{PanMul2020}.
Overall, to achieve these benefits introduced by RIS, channel state information is key.


In RIS-aided communication systems,
the signals are  transmitted by the BS to RIS through the BS-RIS link, and reflected passively by RIS, then received by UE through the RIS-UE link.
The channel estimation task is to obtain the estimate of  the cascaded channel which consists of the BS-RIS link and RIS-UE link.
The recent works focusing on estimation of the RIS-aided channel can be categorized into two types.
The first type focuses on estimating the cascaded channel as a whole \cite{Jensen2020,YouGroup2020,chen2019channel}.
The work in \cite{Jensen2020} proposed a minimum mean square error channel estimation approach.
To handle the high training overhead, the work in \cite{YouGroup2020} divided the high-dimensional channel matrix into low-dimensional sub-groups. In \cite{chen2019channel}, a compressed sensing based method was proposed to estimate the angle-of-arrival (AoA) and angle-of-departure (AoD) associated with the RIS-aided channel.
Although the cascaded channel is sufficient to design transmit and passive beamforming \cite{ChenSum2019,Wu2018Intelligent},
the structure of the cascaded channel is not fully considered in \cite{Jensen2020,YouGroup2020,chen2019channel}.
The second type of existing works focus on estimating the channels of BS-RIS link and RIS-UE link \cite{De2021Channel,Hu2021Two,He2020Cascaded}. A tensor decomposition method was proposed in \cite{De2021Channel}, which leverages the parallel factor tensor modeling of the received signals.
The work \cite{Hu2021Two} proposed a two-timescale framework based on the characteristic that the BS-RIS link is quasi-static, while the RIS-UE link is mobile. The authors of \cite{He2020Cascaded} presented a two-stage method that utilizes the techniques of sparse matrix factorization and matrix completion. However, these methods \cite{De2021Channel,Hu2021Two,He2020Cascaded} require high overhead \cite{De2021Channel} or additional system constraints, such as  dual-link pilot transmission in \cite{Hu2021Two} or sparsity of RIS phase shifts in \cite{He2020Cascaded}.


In this work, we develop a cost-efficient method for RIS-aided channel estimation by factorizing the channel matrix as a product of low-dimensional matrices, which  helps to reduce the training overhead without any additional system constraints.
Specifically, we perform rank-one matrix factorization (MF) to formulate the RIS-aided channel estimation as a phase retrieval problem, which is solved by gradient descent or alternating minimization methods.
We  then extend the proposed  MF method into multi-user scenarios. The simulations illustrate that the proposed MF method achieves more accurate estimation than existing works and has low training overhead.

\emph{Notations:} A bold lower case letter $\mathbf{a}$ is a vector and a bold capital letter $\mathbf{A}$ is a matrix. ${{\mathbf{A}}^{T}}$, ${{\mathbf{A}}^{*}}$, ${{\mathbf{A}}^{H}}$, ${{\mathbf{A}}^{-1}}$, $\mathrm{tr}(\mathbf{A})$,  ${{\left\| \mathbf{A} \right\|}_{F}}$ and ${{\left\| \mathbf{a} \right\|}_{2}}$ are, respectively, the transpose, conjugate, Hermitian, inverse, trace, Frobenius norm of $\bA$, and $2$-norm of $\mathbf{a}$.
${{[\mathbf{A}]}_{:,i}}$ and ${{[\mathbf{A}]}_{i,:}}$ are, respectively, the $i$th column and $i$th row of $\mathbf{A}$. $\mathrm{\mathop{vec}}(\mathbf{A})$ stacks the columns of $\mathbf{A}$ and forms a long column vector.
$\mathrm{\mathop{diag}}(\mathbf{a})$ returns a square diagonal matrix with the vector $\ba$ on the main diagonal.
 $\opRe\{z\}$ and  $\opIm\{z\}$ are the real and imaginary parts of the complex number $z$, respectively.
 $\mathbf{A}\otimes \mathbf{B}$ and  $\mathbf{A} \circ \mathbf{B}$ denote the Kronecker and Hadamard product of $\mathbf{A}$ and $\mathbf{B}$, respectively.

\section{Channel Model} \label{section channel model}
In this section, we introduce the signal and channel model of an RIS-aided communication system.

Suppose the BS has $N$ antennas, the RIS has $M$ elements, and the UE has one antenna. The BS-RIS channel is $\bG\in \C^{M \times N}$, and RIS-UE channel is $\bh_r \in \C^{M\times 1}$, and the BS-UE channel is $\bh_d \in \C^{N\times 1}$.
For the RIS, $\bthe \in \C^{M \times 1}$ denotes the phase shifts, i.e.,  $\bthe=[\beta_1e^{j\theta_1},\ldots,\beta_me^{j\theta_m},\ldots,\beta_Me^{j\theta_M}]^T$, where
$\theta_m \in [0,2\pi)$ and $\beta_m \in [0,1]$ are the phase-shift value and amplitude reflection coefficient of the $m$th element. Here, we let $\beta_m=1, \forall m$ to maximize the signal reflection \cite{Jensen2020,Wu2018Intelligent}.

In this work, we assume there is line-of-sight (LOS) path\footnote{The LOS path between RIS and BS can be achieved by deploying RIS near BS. We extend it to other channel models in Section \ref{extension dl MF}.} between BS and RIS. Then, the channel model $\bG$ is given by
{\setlength\abovedisplayskip{0.5pt}
\setlength\belowdisplayskip{0.5pt}
\beq
\bG = \beta_{BR}\ba_R(\phi)\ba_B^H(\psi), \label{channel base RIS}
\eeq
where $\beta_{BR}$ is the complex path gain,
$\phi \in [0,1)$ and $\psi \in [0,1)$ are the effective  AoA and AoD of RIS and BS, respectively. In this work, it is assumed BS and RIS have uniform linear antenna array (ULA)  \cite{He2020Cascaded,chen2019channel}. Thus,
\beq
\!\!\ba_B(\psi) \!\!\!&=&\!\! \!\frac{1}{\sqrt{N}} [ 1, e^{-j2\pi \psi}, \ldots,e^{-j 2 \pi \psi(N-1)} ]^T \in \C^{N\times 1},\nonumber\\
\!\!\ba_R(\phi)\!\!\! &=& \!\!\!\frac{1}{\sqrt{M}} [ 1, e^{-j2\pi \phi}, \ldots,e^{-j 2 \pi \phi(M-1)} ]^T\in \C^{M\times 1},\nonumber
\eeq
are the array response vectors of BS and RIS. The proposed method in this paper can be  extended to the planar antenna array at RIS by expressing $\ba_R(\cdot)$ above in a 2D form \cite{Heath16}.}

For the channel $\bh_r$ between RIS and UE, we assume that there is no LOS path and it is the Rayleigh fading channel \cite{NadeemCE2020}, i.e.,
$\bh_r  = \beta_{RU}\bz$,
where $\beta_{RU}$ is the complex path gain, and $\bz \in \C^{M \times 1}$ is a complex Gaussian random variable $\cC \cN(0,1)$.

\vspace{-0.2cm}
\section{RIS-Aided Channel Estimation via MF} \label{section down-link}
In this section, we discuss the down-link and single-user RIS-aided channel estimation by using the MF method.
\vspace{-0.2cm}

\subsection{Problem Formulation}
\vspace{-0.1cm}
In the down-link transmission, the received signal at UE is
{\setlength\abovedisplayskip{0.5pt}
\setlength\belowdisplayskip{0.5pt}
\beq
r=(\bh_d^H+\bh_r^H \bThe \bG)\bx+n,
\eeq
where $\bThe = \diag(\bthe) \in \C^{M \times M}$, $\bx \in \C^{N \times 1}$ is the transmitted signal, and $n \in \C$ is noise distributed as $\cC \cN(0, \sigma^2)$.}

We assume that the channel of direct path $\bh_d$ is  accurately estimated, which can be done by using traditional channel estimation methods before enabling the RIS. Here, we suppose there are $K\ge M$ transmit pilots for channel estimation. Then, the received signal at the $k$th training transmission is
\beq
r_k = \bh_r^H \bThe_k \bG \bx_k + n_k
= \bthe_k^T \diag(\bh_r^H) \bG \bx_k + n_k, \label{original rk}
\eeq
where $\bthe_k$ is the $k$th phase shifts of RIS, and $\bx_k$ is the $k$th transmitted signal with $\| \bx_k\|_2=1$, thus $\text{SNR}=1/\sigma^2$.
The effective cascaded channel is expressed as $\bH_e=\diag(\bh_r^H) \bG \in \C^ {M \times N}$.
Plugging the model of $\bG$ in \eqref{channel base RIS}, we express \eqref{original rk} as
\beq
r_k
= \bthe_k^T \bar{\ba}\ba_B^H(\psi) \bx_k + n_k, \label{revised receive signal}
\eeq
where $\bar{\ba}=\diag(\bh_r^H) \beta_{BR}\ba_R(\phi) \in \C^{M \times 1}$, and $\bH_e = \bar{\ba}\ba_B^H(\psi)$. Then, the task  in this work is to estimate $\psi$ and $\bar{\ba}$ from $\{  r_k\}_{k=1}^K$.
From the model expressed in \eqref{revised receive signal}, the problem formulation of maximal likelihood estimation is,
{\setlength\abovedisplayskip{0.5pt}
\setlength\belowdisplayskip{0.5pt}
\beq
\min_{\bar{\ba},\psi} \sum_{k=1}^{K}\left|\bthe_k^T \bar{\ba}\ba_B^H(\psi) \bx_k -r_k \right|^2. \label{ls problem}
\eeq
For convenience, the objective function in \eqref{ls problem} is defined as $J(\bar{\ba}, \psi)$.}
If we ignore the structure of $\ba_B(\psi)$, the problem in \eqref{ls problem} is a conventional phase retrieval problem, which can be solved by low-rank matrix recovery techniques \cite{ZhangSD,jain2013low,ChiNon2019}. However, these approaches do not take the structure of $\ba_B(\psi)$ into account.
In this work, we aim to estimate $\bar{\ba}$ and $\psi$ instead.
As \eqref{ls problem} is a non-convex problem, we consider the use of alternating minimization and gradient descent approaches to find its solution.
The details of the proposed RIS-aided channel estimation are shown in Algorithm \ref{RIS channel estimation}.
\vspace{-0.3cm}

\begin{algorithm} [t]
\caption{Down-link and single-user RIS-aided channel estimation via matrix factorization}
\label{RIS channel estimation}
\begin{algorithmic} [1]
\STATE Input: BS antennas $N$, RIS elements $M$, transmitted signals $\{\bx_k\}_{k=1}^K$, received signals $\{r_k\}_{k=1}^K$, step size $\eta$.
\STATE Initialization: $\bar{\ba}^{(0)}$ and $\psi^{(0)}$ are from \eqref{initial a solu} and \eqref{initial psi}.
\FOR{$p=0,1,2,\ldots$}
 \STATE Update $\bar{\ba}^{(p)}$, $\psi^{(p)}$ to $\bar{\ba}^{(p+1)}$, $\psi^{(p+1)}$ according to gradient method in \eqref{gradient update a re}, \eqref{gradient update a im} and \eqref{gradient update b} or  the alternating minimization method in \eqref{alter update a} and \eqref{alter update b}.
 \STATE If the iteration statures or the maximum iteration is attained, it terminates the iteration.
\ENDFOR
\STATE Calculate the estimation of cascaded channel: $\widehat{\bH}_e \!=\! \bar{\ba}^{(p+1)} \ba_B^H(\psi^{(p+1)})$.
\STATE Output:  $\widehat{\bH}_e$.
\end{algorithmic}
\end{algorithm}

\vspace{-0.1cm}
\subsection{Initialization}
\vspace{-0.1cm}
Solving the non-convex optimization problem \eqref{ls problem} using either alternating minimization or gradient descent approach requires good initialization.
In the following, we discuss how to initialize these variables.

According to \eqref{revised receive signal}, we define $\bS = {\sqrt{N}}/{K} \sum_{k=1}^{K} r_k \bthe_k^* \bx_k^H \in \C^{M \times N}$.
Then, the initialization of $\psi$ is given by
{\setlength\abovedisplayskip{0.5pt}
\setlength\belowdisplayskip{0.5pt}
\beq
\psi^{(0)} = \argmax_{\psi} \lA  \bS \ba_B(\psi) \rA^2_2, \label{initial psi}
\eeq
which means that $\ba_B(\psi^{(0)} )$ is most correlated with $\bS$.} The problem in \eqref{initial psi} is one-dimensional, which can be solved by using coarse grid search and then refining the result gradually.

To initialize $\bar{\ba}$, we solve \eqref{ls problem} with the initialized $\psi^{(0)}$,
\vspace{-0.2cm}
\beq
\bar{\ba}^{(0)} = \argmin_{\bar{\ba}} \sum_{k=1}^{K}\left|\bthe_k^T \bar{\ba}\ba_B^H(\psi^{(0)}) \bx_k -r_k \right|^2, \label{initilization a}
\eeq
which is a common least squares (LS) problem.
Define $\br=[r_1,\ldots, r_K]^T \in \C^{K \times 1}$ and $\bB^{(0)} \in \C^{K \times M}$ with $[\bB^{(0)}]_{k,:} = \ba_B^H(\psi^{(0)}) \bx_k \bthe_k^T$, the problem in \eqref{initilization a} is rewritten as
{\setlength\abovedisplayskip{1pt}
\setlength\belowdisplayskip{1pt}
\beq
\bar{\ba}^{(0)}= \argmin_{\bar{\ba}}\| \bB^{(0)} \bar{\ba}- \br \|_2^2, \label{initial a solu}
\eeq
where the solution is $\bar{\ba}^{(0)} = ((\bB^{(0)})^H \bB^{(0)})^{-1}(\bB^{(0)})^H \br$.} Because we assume that $K\ge M$, $(\bB^{(0)})^H \bB^{(0)}$ is invertible.
\vspace{-0.4cm}
\subsection{Alternating Minimization}

After initializing the $\psi$ and $\bar{\ba}$ with  $\psi^{(0)} $ and $\bar{\ba}^{(0)}$, respectively, we iteratively refine these variables as follows:
\begin{itemize}
  \item Fix $\bar{\ba}$, solve the optimization problem \eqref{ls problem} over $\psi$.
  \item Fix $\psi$, solve the optimization problem \eqref{ls problem} over $\bar{\ba}$.
\end{itemize}

We denote $p$ as the index of iteration. For the optimization of $\psi$ with fixed $\bar{\ba}^{(p)}$, the sub-problem is given by
{\setlength\abovedisplayskip{1pt}
\setlength\belowdisplayskip{1pt}
\beq
\psi^{(p+1)} = \argmin_{\psi} \sum_{k=1}^{K}|\bthe_k^T \bar{\ba}^{(p)}\ba_B^H(\psi) \bx_k -r_k |^2. \nonumber
\eeq
By denoting $\bA^{(p)}\in \C^{N \times K}$ with $[\bA^{(p)}]_{:,k} =\bthe_k^T \bar{\ba}^{(p)} \bx_k $, the} problem above can be simplified as
{\setlength\abovedisplayskip{0.5pt}
\setlength\belowdisplayskip{0.5pt}
\beq
\psi^{(p+1)} = \argmin_{\psi}\|\ba_B^H(\psi) \bA^{(p)} -\br^T  \|_2^2,\label{alter update b}
\eeq
which can be solved by using similar technique as \eqref{initial psi}.}
For optimizing $\bar{\ba}$ with fixed $\psi^{(p+1)}$, the sub-problem is
$\bar{\ba}^{(p+1)}= \argmin_{\bar{\ba}} \sum_{k=1}^{K}| \bthe_k^T \bar{\ba} \ba_B^H(\psi^{(p+1)}) \bx_k -r_k |^2$, whose solution is
\beq
 \bar{\ba}^{(p+1)} = ((\bB^{(p+1)})^H \bB^{(p+1)})^{-1}(\bB^{(p+1)})^H \br, \label{alter update a}
\eeq
with $[\bB^{(p+1)}]_{k,:} = \ba_B^H(\psi^{(p+1)}) \bx_k \bthe_k^T$.
\vspace{-0.2cm}
\subsection{Gradient Descent}
Apart from the alternating minimization, we can also employ a gradient descent approach to obtain the local optimal solution of \eqref{ls problem}. The gradient $J(\bar{\ba}, \psi)$ in \eqref{ls problem} with respect to the real and imaginary parts of $\bar{\ba}$ are given by
{\setlength\abovedisplayskip{1pt}
\setlength\belowdisplayskip{1pt}
\beq
\frac{dJ}{d \opRe \{\bar{\ba}\}} =  \opRe \left\{\sum_{k=1}^{K} 2 \bx_k^H \ba_B(\psi) \bthe_k^*(\ba_B^H(\psi)  \bx_k \bthe_k^T \bar{\ba}-r_k)\right\}, \nonumber \\
\frac{dJ}{d \opIm\{\bar{\ba}\}} =  \opIm \left\{\sum_{k=1}^{K} 2 \bx_k^H \ba_B(\psi) \bthe_k^*(\ba_B^H(\psi)  \bx_k \bthe_k^T \bar{\ba}-r_k) \right\}, \nonumber
\eeq
respectively.} From the chain rule, the gradient  of $J(\bar{\ba}, \psi)$ with respect to $\psi$ is given by
{\setlength\abovedisplayskip{0.5pt}
\setlength\belowdisplayskip{0.5pt}
\beq
\frac{dJ}{d\psi}\! =\!\left( \frac{dJ}{d \opRe \{\ba_B\}}\right)^T\!\! \frac{d\opRe\{ \ba_B\}}{d\psi}\!\!+\!\!\left( \frac{dJ}{d\opIm \{\ba_B\}}\right)^T\!\! \frac{d\opIm \{\ba_B\}}{d\psi}, \nonumber
\eeq}
{\setlength\abovedisplayskip{0.5pt}
\setlength\belowdisplayskip{0.5pt}where
\beq
\frac{dJ}{d \opRe \{\ba_B\}} &\!\! =&\!\! \opRe  \left\{\sum_{k=1}^{K} 2\bthe_k^T \bar{\ba}\bx_k (\bar{\ba}^H \bthe_k^* \bx_k^H \ba_B -r_k^H) \right\}, \nonumber \\
\frac{dJ}{d \opIm \{\ba_B\}} &\!\! =&\!\!  \opIm  \left\{\sum_{k=1}^{K} 2\bthe_k^T \bar{\ba}\bx_k (\bar{\ba}^H \bthe_k^* \bx_k^H \ba_B -r_k^H)\right\}, \nonumber\\
\frac{d\opRe \{\ba_B\}}{d\psi} &\!\! =&\!\! \frac{-1}{\sqrt{N}} \sin(\psi \bz) \circ \bz,
~\frac{d\opIm \{\ba_B\}}{d\psi} \!= \!\frac{-1}{\sqrt{N}} \cos(\psi \bz) \circ \bz,\nonumber
\eeq
with $\bz=2\pi[0,1,\ldots,N-1]^T$.} Therefore, the updating rules of $\bar{\ba}$ and $\psi$ are explicitly,
{\setlength\abovedisplayskip{0.6pt}
\setlength\belowdisplayskip{0.6pt}
\beq
\opRe \{\bar{\ba}^{(p+1)}\} &=&\opRe \{\bar{\ba}^{(p)}\}  \!-\! \eta \frac{dJ}{d \opRe \{\bar{\ba}\}}\bigg|_{\bar{\ba}=\bar{\ba}^{(p)}}, \label{gradient update a re} \\
\opIm \{\bar{\ba}^{(p+1)}\} &=& \opIm \{\bar{\ba}^{(p)}\}\!  -\! \eta \frac{dJ}{d \opIm\{ \bar{\ba}\}}\bigg|_{\bar{\ba}=\bar{\ba}^{(p)}},\label{gradient update a im}  \\
\psi^{(p+1)} &=& \psi^{(p)} -\eta \frac{dJ}{d\psi}\bigg|_{\psi=\psi^{(p)}} \label{gradient update b},
\eeq
where $\eta$ is the step size of gradient descent.}

\vspace{-0.4cm}

\subsection{Discussions of the Proposed MF Method}\label{extension dl MF}
In this subsection, we talk about the extension of proposed MF method when the channel model has the different structures as provided Section \ref{section channel model}.
\begin{remark}\label{remark more rank}
When there are more than one paths in BS-RIS link, i.e., $\bG = \sum_{l=1}^{L}\beta^{(l)}_{BR}\ba_R(\phi^{(l)})\ba_B^H(\psi^{(l)})$, where $L$ is the number of paths, the cascaded channel can be expressed as
{\setlength\abovedisplayskip{1pt}
\setlength\belowdisplayskip{1pt}
\beq
\bH_e = \sum_{l=1}^{L}\diag(\bh_r^H) \beta^{(l)}_{BR}\ba_R(\phi^{(l)})\ba_B^H(\psi^{(l)})= \sum_{l=1}^{L}\bH_e^{(l)}. \nonumber
\eeq
Each $\bH_e^{(l)}$ can be estimated by the proposed MF method after subtracting the contributions of the previously estimated paths.}
\end{remark}
\begin{remark}
 When the RIS is located near the UE resulting in a LOS path between RIS and UE and no LOS path between BS and RIS. The channel of RIS-UE link is $\bh_r =  \beta_{RU} \ba_R(\phi)$, where $\beta_{RU}$ is the complex path gain and $\phi$ is the AoD of RIS. The channel $\bG\in \C^{M \times N}$ of BS-RIS link is Rayleigh fading. Thus, the received signal at $k$th transmission in \eqref{original rk} is
{\setlength\abovedisplayskip{0.5pt}
\setlength\belowdisplayskip{0.5pt}
\beq
r_k  &=& (\beta_{RU} \ba_R(\phi) )^H \bThe_k \bG \bx_k + n_k  \nonumber \\
  &=&  \ba_R(\phi) ^H \bThe_k \bar{\bG} \bx_k + n_k, \nonumber
\eeq
where $ \bar{\bG} = \beta_{RU} \bG$. In this scenario, the channel problem is to estimate} $\phi$ and $\bar{\bG}$. The proposed MF methods can also be applicable. It is worth noting that the number of minimal required training pilots is $MN$ because $\bar{\bG}$ is full rank.
\end{remark}

\begin{algorithm} [t]
\caption{Up-link and multi-user RIS-aided channel estimation via matrix factorization}
\label{RIS channel estimation up-link}
\begin{algorithmic} [1]
\STATE Input: BS antennas $N$, RIS elements $M$, number of users $Q$, transmitted signals $x_{q,k,t}, \forall q\le Q, k\le K, t \le T$, received signals $\{\bR_k\}_{k=1}^K$, phase shift $\{\bthe_k\}_{k=1}^{K}$.
\STATE Obtain the signal of $q$th UE $\bS_q$ through \eqref{q observation}.
\STATE Estimate $\widehat{\psi}$ of the BS-RIS link through \eqref{up-link estimate psi}.
\FOR{$q=1,2,\ldots,Q$}
\STATE Estimate $\{\hat{\bar{\ba}}_q\}_{q=1}^Q$ of the RIS-UE link through \eqref{up-link estimation aq}.
\STATE Calculate the estimation of cascaded channel of $q$th UE: $\widehat{\bH}_q = \ba_B(\widehat{\psi})\hat{\bar{\ba}}_q^H$.
\ENDFOR
\STATE Output:  $\{ \widehat{\bH}_q\}_{q=1}^Q$.
\end{algorithmic}
\end{algorithm}

\vspace{-0.2cm}
\section{Extension to Multi-User Scenarios} \label{section up-link}
In the down-link and multi-user RIS-aided systems, each user can estimate the channel separately, which is a trivial extension of the method discussed in Section \ref{section down-link}.
Therefore, in this section, we mainly focus on the up-link and multi-user RIS-aided channel estimation by using the MF method.
\vspace{-0.4cm}
\subsection{Problem Formulation}
Suppose there are $Q$ users, the channel between BS and RIS is $\bG \in \C^{N \times M}$, the channel between the RIS and UE $q$ is $\bh_q \in \C^{M \times 1}$, which are all defined in a similar way as down-link scenario in Section \ref{section down-link}.
The received signal at the BS is  the summation of $Q$ UEs, i.e.,
$\br= \sum_{q=1}^{Q}\bG\diag(\bthe) \bh_q x_q+
\bn   \in \C^{N \times 1}$,
where $|x_q|=1$ denotes the transmitted signal of $q$th UE and $\bn \in \C^{N\times 1}$ is noise distributed as $\cC \cN(\boldsymbol{0}, \sigma^2\bI_N)$.
For the channel estimation framework, it is assumed that there are $K$ transmitted blocks. In each block, the RIS fixes the phase shifts, and each UE sends $T$ symbols.
Preciesly, in the $k$th block, the received signal at BS is $\bR_k \in \C^{N \times T}$, given by
{\setlength\abovedisplayskip{0.5pt}
\setlength\belowdisplayskip{0.5pt}
\beq
\bR_k= \sum_{q=1}^{Q}\bG\diag(\bthe_k) \bh_q \bx_{q,k}^T +\bN_k, \label{obs Rt}
\eeq
where $\bx_{q,k}  \! =  \![x_{q,k,1},\ldots,x_{q,k,T}]^T \!  \!\in    \!\C^{T \times 1}$ is} the transmitted signal of $q$th UE in $k$th block, $\bthe_k \in \C^{M \times 1}$ denotes the $k$th RIS phase shifts, and $\bN_k=[\bn_{k,1},\ldots,\bn_{k,T}]\!  \in    \! \C^{N \times T}$ denotes the noise in $k$th block.

We assume the signals transmitted by different UEs are orthogonal, i.e.,
$\bx_{q,k}^H \bx_{p,k} =0, \forall q\neq p$ and $\bx_{q,k}^H \bx_{q,k} =T, \forall q$, which can be achieved if $T \ge Q$.
Thus, we right multiply $\bx_{q,k}^*$ for \eqref{obs Rt} to obtain the signal associated with $q$th UE as
{\setlength\abovedisplayskip{0.5pt}
\setlength\belowdisplayskip{0.5pt}
\beq
 \bR_k \bx_{q,k}^* =T \bG\diag(\bthe_k)\bh_q +\bN_k  \bx_{q,k}^*. \label{mid form}
\eeq
By defining $\bs_{q,k}  =(1/T) \bR_k\bx_{q,k}^* \in \C^{N \times 1}$ and $\bn_{q,k} =(1/T) \bN_k  \bx_{q,k}^* \in \C^{N \times 1}$}, we rewrite the expression in \eqref{mid form} as
{\setlength\abovedisplayskip{0.5pt}
\setlength\belowdisplayskip{0.5pt}
\beq
~~~~~\bs_{q,k} =\bG\diag(\bthe_k)\bh_q + \bn_{q,k} =\bG \diag(\bh_q)\bthe_k + \bn_{q,k}, \nonumber
\eeq
where the entries in $\bn_{q,k}$ are i.i.d. with distribution of $\cC \cN(0,\sigma^2/T)$.}
We then collect $K$ blocks $\{ \bs_{q,k}\}_{k=1}^K$ of the $q$th UE as $\bS_q =[\bs_{q,1},\ldots,\bs_{q,K}]\!\in\! \C^{N \times K}$ in the following,
{\setlength\abovedisplayskip{0.5pt}
\setlength\belowdisplayskip{0.5pt}
\beq
\bS_q &=& \bG \diag(\bh_q) [\bthe_1,\ldots,\bthe_K]  +[\bn_{q,1},\ldots,\bn_{q,K}]\nonumber \\
&=& \bG \diag(\bh_q) \bar{\bThe}  + \bN_q, \label{q observation}
\eeq
where $\bar{\bThe} = [\bthe_1,\ldots,\bthe_K] \in \C^{M \times K}$}, and $\bN_q = [\bn_{q,1},\ldots,\bn_{q,K}]\in \C^{M \times K}$.
Therefore, from \eqref{q observation}, the cascaded channel for $q$th UE is $\bH_q = \bG \diag(\bh_q) \in \C^{N \times M}$.
Since $\bG = \beta_{BR} \ba_B(\psi) \ba_R^H (\phi)$ in \eqref{channel base RIS},
we express $\bS_q$ in \eqref{q observation} as
{\setlength\abovedisplayskip{0.5pt}
\setlength\belowdisplayskip{0.5pt}
\beq
\bS_q = \ba_B(\psi) \bar{\ba}_q^H \bar{\bThe} +  \bN_q, \label{exp Sq}
\eeq
where $\bar{\ba}_q= (\beta_{BR}\ba_R^H (\phi) \diag(\bh_q))^H \in \C^{M \times 1} $.}
Then, the cascaded channel for $q$th UE is given by $ \bH_q=\ba_B(\psi) \bar{\ba}_q^H$.
Therefore, the channel estimation task is to estimate $\psi$ and $\{ \bar{\ba}_q\}_{q=1}^Q$. Because of the Guassian distribution of $\bN_q$, we consider the following optimization problem:
{\setlength\abovedisplayskip{1pt}
\setlength\belowdisplayskip{1pt}
\beq
\min_{\psi, \{ \bar{\ba}_q\}_{q=1}^Q} \sum_{q=1}^{Q} \| \bS_q - \ba_B(\psi) \bar{\ba}_q^H \bar{\bThe}\|_F^2. \label{up-link form}
\eeq}

In the following, we complete the estimation of $\psi$ and $\{ \bar{\ba}_q\}_{q=1}^Q$ from \eqref{up-link form} in two steps. In the first step, we estimate
the channel associated with BS-RIS link, i.e., $\psi$, which is shared among multiple users.
In the second step, we estimate channel associated with the  RIS-UE link, i.e., $\{ \bar{\ba}_q\}_{q=1}^Q$.
\vspace{-0.4cm}
\subsection{Multi-User Channel Estimation }
\subsubsection{Estimate $\psi$}
We collect the observations of $Q$ UEs as
$\bS = [\bS_1,\ldots,\bS_Q] \in \C^{N \times KQ}$.
It is from \eqref{exp Sq} that $\bS$ is a rank-one matrix, whose column subspace is spanned by $\ba_B (\psi)$. Thus, the estimation of $\psi$ from \eqref{up-link form} is formulated as follows,
{\setlength\abovedisplayskip{0.5pt}
\setlength\belowdisplayskip{0.5pt}
\beq
\widehat{\psi} = \argmax_{\psi} \|  \ba_B(\psi)^H \bS \|_2^2, \label{up-link estimate psi}
\eeq
which can be solved by using similar technique as \eqref{initial psi}.}

\subsubsection{Estimate $\{ \bar{\ba}_q\}_{q=1}^Q$}
With the estimation of $\ba_B(\widehat{\psi})$, solving the problem in \eqref{up-link form} with respect to $\{ \bar{\ba}_q\}_{q=1}^Q$ can then be separated. In particular, for the $q$th UE, we now need to solve the following problem,
{\setlength\abovedisplayskip{1pt}
\setlength\belowdisplayskip{1pt}
\beq
\hat{\bar{\ba}}_q = \argmin_{\bar{\ba}_q}\|  \bS_q - \ba_B(\widehat{\psi})\bar{\ba}_q^H \bar{\bThe}\|_F^2. \label{up-link form q}
\eeq
By vectorizing the notation as}
$\vec(\ba_B(\widehat{\psi}) \bar{\ba}_q^H \bar{\bThe}) = \bar{\bThe}^T \otimes \ba_B(\widehat{\psi}) \vec(\bar{\ba}_q^H)
=\bar{\bThe}^T \otimes \ba_B(\widehat{\psi})\bar{\ba}_q^*$,
the problem in \eqref{up-link form q} can be rewritten as
$
\hat{\bar{\ba}}_q = \argmin_{\bar{\ba}_q} \|  \vec(\bS_q) -\bar{\bThe}^T \otimes \ba_B(\widehat{\psi})\bar{\ba}_q^*  \|_2^2
$.
Since this is a LS problem, and the solution is given by
{\setlength\abovedisplayskip{0.5pt}
\setlength\belowdisplayskip{0.5pt}
\beq
\hat{\bar{\ba}}_q  = \left( (\bV(\widehat{\psi})^H\bV(\widehat{\psi}))^{-1} \bV(\widehat{\psi})^H \vec(\bS_q) \right)\!^*, \label{up-link estimation aq}
\eeq
where $\bV(\widehat{\psi}) = \bar{\bThe}^T \otimes \ba_B(\widehat{\psi}) \in \C^{N K \times M}$.}
We can check that $\| \bV(\widehat{\psi}) \|_F^2 =MK $.
Given $\widehat{\psi}$ in \eqref{up-link estimate psi} and $\hat{\bar{\ba}}_q$ in \eqref{up-link estimation aq}, the estimate of cascaded channel of $q$th UE is $\widehat{\bH}_q = \ba_B(\widehat{\psi})\hat{\bar{\ba}}_q^H$. The details of the multi-user RIS channel estimation are in Algorithm \ref{RIS channel estimation up-link}.

\subsubsection{Design of Phase Shifts of RIS}
By denoting $\bV (\widehat{\psi})^{\dagger} =(\bV(\widehat{\psi})^H\bV(\widehat{\psi}))^{-1}\bV(\widehat{\psi})^H \in \C^{M \times NK}$, we can calculate the mean-square error (MSE) of $\hat{\bar{\ba}}_q $ as follows,
\beq
\E \| \bV (\widehat{\psi})^{\dagger} \! \vec(\bS_q) \! -\! \bar{\ba}_q^*  \|_2^2\!  = \! \E \| \bV (\widehat{\psi})^{\dagger} (\bV (\psi) \bar{\ba}_q^* \!  +  \!\vec(\bN_q)) \! - \!\bar{\ba}_q^*  \| _2^2,  \nonumber
\eeq
where the equality holds from $\bS_q=\bV (\psi) \bar{\ba}_q^* \! + \!\vec(\bN_q)$.
Here, we assume that the estimation of $\psi$ is accurate, so we have $ \bV(\widehat{\psi})^{\dagger} \bV(\psi) \approx \bI_M$. Then, the MSE can be approximated as
\beq
\text{MSE}(\bar{\ba}_q) \! \approx  \! \E \| \bV (\widehat{\psi})^{\dagger}  \!\vec(\bN_q)  \|_2^2   \!=   \! \frac{\sigma^2 }{T} \! \tr((\bV(\widehat{\psi})^{H}\! \bV (\widehat{\psi}))^{-1} ). \! \label{express MSE aq}
\eeq
Therefore, to minimize the MSE with respect to $\bar{\ba}_q$, we need to solve the following problem,
{\setlength\abovedisplayskip{0.5pt}
\setlength\belowdisplayskip{0.5pt}
\beq
\min_{\bar{\bThe}}  \tr((\bV (\widehat{\psi})^{H}\bV (\widehat{\psi}))^{-1}), \text{subject to } \| \bV (\widehat{\psi}) \|_F^2 = MK. \label{min theta}
\eeq
The following lemma states the condition for $\bar{\bThe}$}, which achieves the minimal MSE of $\bar{\ba}_q$ in \eqref{express MSE aq}.

\begin{lemma}
When the phase shift matrix $\bar{\bThe}$ in \eqref{q observation} satisfies $\bar{\bThe} \bar{\bThe}^H = K \bI_M$, the minimal MSE of $\bar{\ba}_q, q=1,\ldots,Q$ in \eqref{express MSE aq} can be achieved, whose value is
$\text{MSE}(\bar{\ba}_q)={\sigma^2M}/{(KT)}$.
\end{lemma}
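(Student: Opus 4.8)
The plan is to reduce the trace-minimization problem \eqref{min theta} to a classical inequality between the trace of a positive-definite matrix and the trace of its inverse, and then to characterize the equality case. First I would use the Kronecker-product structure of $\bV(\widehat{\psi}) = \bar{\bThe}^T \otimes \ba_B(\widehat{\psi})$. Since $\ba_B(\widehat{\psi})^H \ba_B(\widehat{\psi}) = 1$ (the array response vector is normalized by $1/\sqrt{N}$), the mixed-product property of the Kronecker product gives
\beq
\bV(\widehat{\psi})^H \bV(\widehat{\psi}) = (\bar{\bThe}^T \otimes \ba_B(\widehat{\psi}))^H (\bar{\bThe}^T \otimes \ba_B(\widehat{\psi})) = (\bar{\bThe}^* \bar{\bThe}^T) \otimes 1 = (\bar{\bThe} \bar{\bThe}^H)^*. \nonumber
\eeq
Hence $\tr((\bV(\widehat{\psi})^H \bV(\widehat{\psi}))^{-1}) = \tr((\bar{\bThe}\bar{\bThe}^H)^{-1})$, and the constraint $\|\bV(\widehat{\psi})\|_F^2 = \tr(\bV(\widehat{\psi})^H\bV(\widehat{\psi})) = \tr(\bar{\bThe}\bar{\bThe}^H) = MK$ is exactly the power constraint on $\bar{\bThe}$. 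So the problem becomes: minimize $\tr(\bM^{-1})$ over Hermitian positive-definite $\bM = \bar{\bThe}\bar{\bThe}^H \in \C^{M\times M}$ subject to $\tr(\bM) = MK$.

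Next I would invoke the eigenvalue form of this problem. Writing $\lambda_1,\ldots,\lambda_M > 0$ for the eigenvalues of $\bM$, we need to minimize $\sum_{m=1}^M 1/\lambda_m$ subject to $\sum_{m=1}^M \lambda_m = MK$. By the Cauchy--Schwarz (or AM--HM) inequality,
\beq
\left(\sum_{m=1}^M \lambda_m\right)\left(\sum_{m=1}^M \frac{1}{\lambda_m}\right) \ge M^2, \nonumber
\eeq
so $\sum_m 1/\lambda_m \ge M^2/(MK) = M/K$, with equality if and only if all $\lambda_m$ are equal, i.e. $\lambda_m = K$ for every $m$, which means $\bM = \bar{\bThe}\bar{\bThe}^H = K\bI_M$. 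Substituting this minimal value into \eqref{express MSE aq} yields $\text{MSE}(\bar{\ba}_q) = (\sigma^2/T)\cdot(M/K) = \sigma^2 M/(KT)$, as claimed; note the bound and its achievability are the same for every $q$ because \eqref{express MSE aq} does not depend on $q$.

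The only genuine subtlety — and the step I would be most careful about — is \emph{feasibility}: one must confirm that a phase-shift matrix $\bar{\bThe}\in\C^{M\times K}$ with unit-modulus entries and $\bar{\bThe}\bar{\bThe}^H = K\bI_M$ actually exists, since otherwise the infimum is not attained. This holds whenever $K \ge M$ (which is assumed): for instance, take the rows of $\bar{\bThe}$ to be $M$ distinct rows of a $K\times K$ DFT matrix, or, when $K$ is a multiple of $M$, concatenate $K/M$ copies of an $M\times M$ DFT matrix; in either case the rows are orthogonal with squared norm $K$ and all entries have modulus one. I would state this construction explicitly so the lemma's hypothesis is shown to be consistent with the unit-modulus constraint $\beta_m = 1$. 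The remaining manipulations — the Kronecker identity and the Cauchy--Schwarz step — are routine.
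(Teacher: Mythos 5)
Your proof is correct and follows essentially the same route as the paper's: reduce to the eigenvalues of $\bV(\widehat{\psi})^H\bV(\widehat{\psi})$ under the trace constraint, show the minimum of $\sum_i 1/\lambda_i$ occurs when all $\lambda_i = K$, and translate the equality condition back to $\bar{\bThe}\bar{\bThe}^H = K\bI_M$ via $\ba_B(\widehat{\psi})^H\ba_B(\widehat{\psi})=1$. Your explicit AM--HM justification and the feasibility check via DFT rows (showing a unit-modulus $\bar{\bThe}$ attaining the bound exists when $K\ge M$) are worthwhile additions that the paper leaves implicit.
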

\begin{proof}
Suppose the eigenvalues of $\bV (\widehat{\psi})^{H}\bV (\widehat{\psi})$ are given by $\lambda_1,\ldots,\lambda_M$. Because we have $\| \bV (\widehat{\psi}) \|_F^2 = MK$, then
$\lambda_1+ ,\ldots,+\lambda_M = MK $.
Thus, the problem in \eqref{min theta} becomes
{\setlength\abovedisplayskip{0.5pt}
\setlength\belowdisplayskip{0.5pt}
\beq
\min \sum_{i=1}^{M} \frac{1}{\lambda_i}, \text{subject to } \sum_{i=1}^{M} \lambda_i=MK, ~\lambda_i \geq 0. \nonumber
\eeq
The above problem is minimized when $\lambda_i  = K, \forall i$, and the minimum is $M/K$.} The optimality condition means that
 $\bV(\widehat{\psi})^H \bV(\widehat{\psi})  = K \bI_M $, precisely,
\beq
(\bar{\bThe}^T \otimes \ba_B(\widehat{\psi}))^H \bar{\bThe}^T \otimes \ba_B(\widehat{\psi}) = K \bI_M. \label{mid condition}
\eeq
Using the fact $\ba_B(\widehat{\psi})^H \ba_B(\widehat{\psi}) = 1$, we can simplify the optimality condition in \eqref{mid condition} as
$\bar{\bThe} \bar{\bThe}^H = K\bI_M$.
Substituting the design of $\bar{\bThe}$ into \eqref{express MSE aq}, the minimized MSE is given by ${\sigma^2M}/{(KT)}$. This concludes the proof.
\end{proof}

\section{Numerical Results} \label{Ssimulation}
 \begin{table}
 \small
		\centering
\caption{Training Overhead of Channel Estimation Methods}
		\begin{tabular}{|c|c|}\hline
			Channel Estimation Methods&Minimal Training Pilots\\ \hline
			Proposed MF-GD&$M$\\ \hline
			Proposed MF-AM &$M$\\ \hline
			LS \cite{Jensen2020}  &$MN$\\ \hline
			LR \cite{jain2013low}  &$M+N$\\ \hline
			KBF \cite{De2021Channel} &  $MN$ \\ \hline
		\end{tabular}
		\label{table use}
	\end{table}

\subsection{Training Pilot Overhead}\label{sec overhead}
In Table \ref{table use}, we compare the minimal training pilots of our proposed MF with gradient descent (MF-GD) and  MF with alternating minimization (MF-AM) with the existing LS method \cite{Jensen2020}, low rank matrix recovery (LR) method \cite{jain2013low}, and KBF method \cite{De2021Channel} in the down-link and single-user scenario. As shown in Table \ref{table use}, the proposed MF-GD and MF-AM only require minimal $M$ training pilots, which are much less than that of the existing works in \cite{Jensen2020,jain2013low,De2021Channel}.

\begin{figure}
\centering
\includegraphics[width=4.0 in]{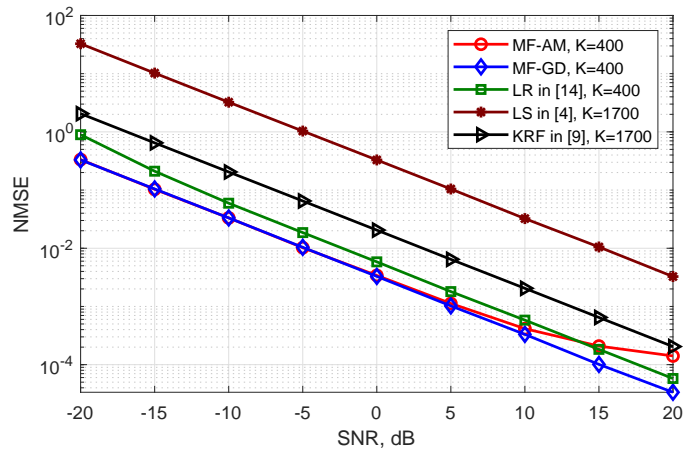}
\caption{NMSE vs. SNR (dB) in down-link and single-user scenario ($N=32, M=50$).} \label{figure_alter_grad}
\end{figure}

\begin{figure}
\centering
\includegraphics[width=4.0 in]{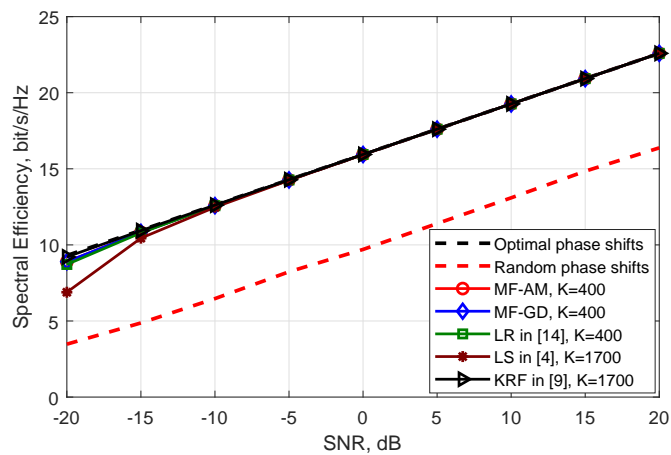}
\caption{Spectral efficiency vs. SNR (dB) in down-link and single-user scenario ($N=32, M=50$).} \label{figure_alter_grad_rate}
\end{figure}
\vspace{-0.3cm}
\subsection{Single-User Scenario}
In Fig. \ref{figure_alter_grad}, we compare the estimation accuracy of our proposed MF-GD and  MF-AM with the LS method \cite{Jensen2020},  LR method \cite{jain2013low}, and KBF method \cite{De2021Channel} in the down-link and single-user scenario.
The normalized MSE (NMSE) is defined as $\E[\| \bH_e  - \widehat{\bH}_e \|_F^2/\| \bH_e  \|_F^2]$.
Since LS  and KBF require $K\ge MN$ to obtain a valid channel estimation, in this work, we let $K=1700$ for LS and KBF methods, but $K  \!=\!400\!\ll\! MN$ for our proposed MF methods.
 Observed from Fig. \ref{figure_alter_grad}, though utilizing much less training pilots, the proposed MF methods still outperform LS and KBF.
 Meanwhile, the proposed MF methods also outperform LR method because they leverage on both the low-rank property and channel structure of the RIS-aided channel.
 Moreover, compared to MF-AM, the MF-GD can achieve even more accurate estimation as SNR increases.

In Fig. \ref{figure_alter_grad_rate}, we compare the spectral efficiency achieved by our proposed MF methods with benchmarks \cite{Jensen2020,jain2013low,De2021Channel}.
The phase shifts of RIS are based on the estimated $\widehat{\bH}_e$ through the method in \cite{Wu2018Intelligent}. The performances of random and optimal phase shifts are also plotted as benchmarks,  where phase shifts of the former are uniformly distributed in $[0,2\pi)$ with zero training pilots, and the latter is designed from the true $\bH_e$. As illustrated in Fig. \ref{figure_alter_grad_rate}, the proposed MF methods can achieve near optimal spectral efficiency with much less training overhead.

\begin{figure}[t]
\centering
\includegraphics[width=4.2 in]{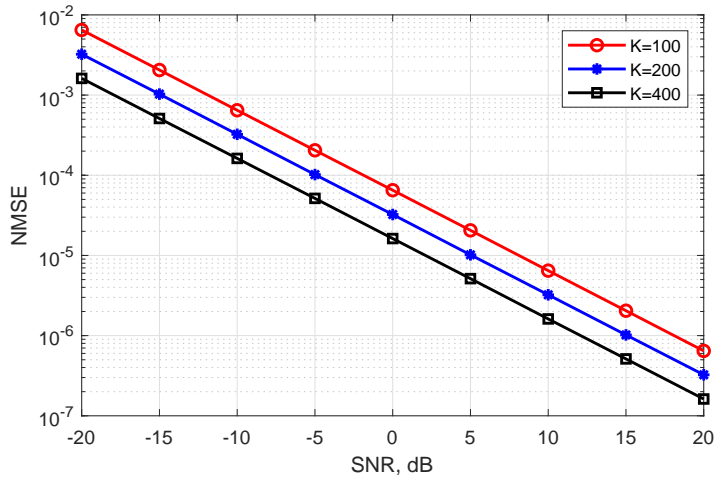}
\caption{NMSE vs. SNR (dB) in up-link and multi-user scenario ($N=32, M=50,Q=5,T=5$).} \label{figure_mul_He}
\end{figure}

\subsection{Multi-User Scenario}
In Fig. \ref{figure_mul_He}, we illustrate the estimation accuracy of proposed MF method in Section \ref{section up-link} for up-link  and multi-user RIS-aided channel.
The NMSE is defined as ${1}/{Q}\sum_{q=1}^Q\E[\| \bH_q  - \widehat{\bH}_q \|_F^2/\| \bH_q  \|_F^2]$.
The simulation parameters are $N=32, M=50, Q=5,T=5$. As can be seen from Fig. \ref{figure_mul_He}, with increasing transmitted blocks $K$, the NMSE of cascaded channel decreases, which is consistent with our analysis.

\vspace{-0.3cm}
\section{Conclusion} \label{SConclusion}
In this paper, we have investigated the RIS-aided channel estimation via MF. By using the proposed MF method, it only estimates low-dimensional matrices to obtain the estimation of RIS-aided channel. Compared to the existing works which directly estimate the cascaded channel, the proposed method achieves more accurate estimation with low training overhead.

\bibliographystyle{IEEEtran}

\bibliography{IEEEabrv,Conference_mmWave_CS}
\clearpage

\end{spacing}
\end{document}